\newtheorem{prop}{Proposition}
\newtheorem{theo}{Theorem}
\title{Bulk viscosity in 2D relativistic fluids: the effects of temperature and modifications to the Rayleigh-Brillouin spectrum}
\author{A L Garc\'ia- Perciante, L Franco-P\'erez and A R M\'endez}
\begin{document}

\maketitle
\begin{abstract}
The dependence of the bulk viscosity with the relativistic parameter
$z=kT/mc^{2}$, obtained through the complete Boltzmann equation \cite{Ch-E,CercignaniKremer},
is thoroughly analyzed. A complete and rigorous examination of the
relevant non-relativistic and ultra-relativistic limits is carried
out in the case of a hard disk model and compared with the results
obtained in the relaxation time approximation. The modifications of
the non-vanishing bulk viscosity to the Rayleigh-Brillouin is briefly
discussed.
\end{abstract}

\section{Introduction}

\label{intro}
The study of relativistic bidimensional fluids is a relevant and interesting topic
which is still in need of extensive analysis. Additional to its importance
in the modelling of axisymmetric systems, for example a charged fluid
in the presence of a magnetic field, orbiting gases, or accretion
in gravitational potentials, the interest in the study of bidimensional
systems in relativistic scenarios has seen a substantial increase
due to the new generation of thin materials and their several applications.
Also, relativistic fluids still pose a challenge, plagued with unaswered
questions and topics of intense debate even in very fundamental issues,
numerical simulations in low dimensionality have been one of the most
valuable assets in order to corroborate theoretical predictions.

The establishment of relativistic transport coefficients in a special
relativistic framework from the complete Boltzmann equation has only
been addressed recently in the 2D scenario. Indeed,
in a separate publication (see Ref. \cite{Garcia-Mendez2018}) the
constitutive equations are established and the transport coefficients
expressed in terms of collision integrals which, for a hard disks
model, can be numerically evaluated. In particular, the bulk viscosity
is expressed in terms of the integral
\begin{equation}
\mathcal{I}\left(z\right)=\int_{\frac{2}{z}}^{\infty}e^{-\left(x-\frac{2}{z}\right)}
\left(\frac{1}{x}+\frac{3}{x^{2}}+\frac{3}{x^{3}}\right)\left(z^{2}x^{2}-4\right)^{5/2}dx\label{1}
\end{equation}
whose dependence on $z$ is not trivial. Moreover,
its convergence in the non-relativistic ($z\rightarrow0$) and ultrarelativistic
($z\rightarrow\infty$) limits is not completely justified. This is precisely part of
the goal of the present work, the formal establishment of the behaviour of Eq. (\ref{1})
in the non-relativistic and ultrarelativistic limits. Also, in order to address the effect
of a non-zero bulk viscosity of the gas for the complete range of $z$ we
study the corresponding modifications on the Rayleigh-Brillouin spectrum.

In order to accomplish such a task, the rest of the work is organized
as follows. In sections \ref{Ch-E} and \ref{visc} we briefly outline
the procedure carried out in Ref. \cite{Garcia-Mendez2018} in order
to establish the analytical expression for the bulk viscosity in the
specific case of a hard disks model. Section \ref{limits} is devoted
to the analysis of the integral defined in Eq. (\ref{1}) in the non
relativistic and ultrarelativistic limits, while section \ref{RB}
addresses the modification that a non-vanishing bulk viscosity has
on the Brillouin peaks for a light scattering spectrum. The discussion
of the results and final remarks are included in section \ref{final}.

\section{Boltzmann equation and Chapman-Enskog approximation}

\label{Ch-E}

In this section, we provide a brief description of the procedure leading to the
function $\mathcal{I}\left(z\right)$ in Eq. (\ref{1}), whose relevant
limits are explored in the main part of this work. The
complete calculation will be published elsewhere (a preprint can be
found in Ref. \cite{Garcia-Mendez2018}) together with the calculation
of the rest of the relevant coefficients, to which the reader is refered
to for further details.

The starting point is the relativistic Boltzmann equation \cite{CercignaniKremer}
in a flat spacetime with a ($+--$) metric which reads
\begin{equation}
v^{\alpha}\frac{\partial}{\partial x^{\alpha}}f\left(x^{\nu},v^{\nu}\right)=\int\int\left(\tilde{f}\tilde{f_{1}}-ff_{1}\right)F\sigma d\chi dv_{1}^{*},\label{2}
\end{equation}
where $f\left(x^{\nu},v^{\nu}\right)$ is the one particle distribution
function, $F$ is the invariant flux, $\gamma=u^{\mu}v_{\mu}/c^2$, with $v^{\mu}$ being the molecular
3-velocity measured in an
arbitrary frame, $u^{\mu}$ is the fluid's 3-velocity and $\sigma$ and $\chi$ are the
corresponding scattering cross section and solid angle in 2D. The
local equilibrium solution to Eq. (\ref{2}) is given by a bidimensional
Maxwell-J\"uttner distribution, that is

\begin{equation}
f^{\left(0\right)}\left(v^{\nu}\right)=\frac{ne^{\frac{1}{z}}}{2\pi c^{2}z\left(1+z\right)}e^{-\frac{u^{\alpha}v_{\alpha}}{zc^{2}}}\label{3}
\end{equation}
where $u^{\alpha}$ corresponds to the fluid's 3-velocity. With such
distribution one can establish statistical definition for the state
variables: $n$ (number density), $u^{\nu}$ (hydrodynamic 3-velocity)
and $\varepsilon$ (internal energy). Considering the particles frame
(Eckart's frame), one has
\begin{equation}
n=\int f^{\left(0\right)}\left(v^{\nu}\right)\gamma dv^{*},\label{4}
\end{equation}
\begin{equation}
nu^{\nu}=\int f^{\left(0\right)}\left(v^{\nu}\right)v^{\nu}dv^{*},\label{5}
\end{equation}
\begin{equation}
n\varepsilon=mc^{2}\int f^{\left(0\right)}\left(v^{\nu}\right)\gamma^{2}dv^{*},\label{6}
\end{equation}
from which one can obtain
\begin{equation}
n\varepsilon=nmc^{2}g\left(z\right)\qquad \textrm{with} \qquad g\left(z\right)=\frac{2z^{2}+2z+1}{z\left(z+1\right)}.\label{7}
\end{equation}
These quantities, in the absence of dissipation, follow Euler's equations.
In order to establish the first order in the gradients distribution,
the Chapman-Enkog solution to Eq. (\ref{2}) reads

\begin{equation}
f\left(v^{\nu}\right)=f^{\left(0\right)}\left(v^{\nu}\right)\left(1+\phi\left(v^{\nu}\right)\right),\label{8}
\end{equation}
where the first order correction to the local equilibrium distribution
function $\phi\left(v^{\nu}\right)$ is given by the solution of the
linearized Boltzmann equation.

\begin{equation}
f^{\left(0\right)}\int\int f^{\left(0\right)}\left(v_{1}\right)\left(\tilde{\phi}_{1}+\tilde{\phi}-\phi_{1}-\phi\right)F\sigma d\chi dv_{1}^{*}=v^{\alpha}\frac{\partial f^{\left(0\right)}\left(v^{\nu}\right)}{\partial x^{\alpha}}.\label{9}
\end{equation}
The dissipative fluxes, which arise from such deviation from equilibrium,
are found as moments of $f^{\left(0\right)}\left(v\right)\phi\left(v^{\nu}\right)$.
In particular, for the energy momentum tensor, which is the focus
of the present work, one has
\begin{equation}
\pi^{\mu\nu}=mh^{\mu\alpha}h^{\nu\beta}\int f^{\left(0\right)}\left(v\right)\phi\left(v^{\nu}\right)
v_{\alpha}v_{\beta}d^{*}v.\label{10}
\end{equation}
Here $h^{\alpha\beta}=\eta^{\alpha\beta}-u^{\alpha}u^{\beta}/c^{2}$
is the spatial projector corresponding to the (2+1) representation
where $u^{\alpha}$ corresponds to the temporal
direction in the comoving frame.

The details of the calculation that follows can be found in Ref.\cite{Garcia-Mendez2018}.
However the authors consider it worthwhile to point out here the particular
step to which the occurrence of a finite bulk viscosity in the relativistic
regime, opposed to the zero value obtained for non-relativistic gases,
can be traced down to. Equation (\ref{9}) is a linear integral equation,
whose solution is a superposition of the homogeneous and a particular
solutions. It is from the latter that the driving terms for the deviation
will appear as the gradients of the state variables. In such a term
one finds, for the part that depends on the velocity gradient and
will thus lead to viscous dissipation,
\begin{equation}
\left(v^{\alpha}f_{,\alpha}^{\left(0\right)}\right)_{visc.}=f^{\left(0\right)}\left\{ \left(\frac{v^{\beta}u_{\beta}}{c^{2}}\right)\left(\frac{1}{n}u^{\alpha}n_{,\alpha}+\frac{1}{T}\left(\frac{\gamma}{z}-g\left(z\right)\right)u^{\alpha}T_{,\alpha}\right)+\frac{v_{\mu}v^{\beta}h_{\beta}^{\alpha}}{zc^{2}}u_{,\alpha}^{\mu}\right\} \label{11}
\end{equation}
where we have introduced the usual decomposition $v^{\alpha}=v^{\beta}h_{\beta}^{\alpha}+\left(v^{\beta}u_{\beta}/c^{2}\right)u^{\alpha}$,
in this case in a (2+1) framework. Notice that the proper time derivatives
of the number density and temperature are coupled with the velocity
gradient through Euler's equations
\begin{equation}
u^{\alpha}n_{,\alpha}=-nu_{,\alpha}^{\alpha}\qquad\textrm{and}\qquad u^{\alpha}T_{,\alpha}=-k_{p}\left(z\right)Tu_{,\alpha}^{\alpha}\label{12-1}
\end{equation}
where $k_{p}\left(z\right)=k/C_{n}=\left(z+1\right)/\left(z
\left(g\left(z\right)+2\left(z+1\right)\right)\right)$.
Also, the term $u^{\alpha}u_{,\alpha}^{\mu}$ is not included in Eq.
(\ref{11}) since the momentum equation in the local equilibrium case
couples it only with the pressure gradient.

Introducing Eqs. (\ref{12-1}) in Eq. (\ref{11}) and splitting the
velocity gradient in antisymmetric, symmetric traceless parts and
the trace times an identity tensor one can write
\begin{equation}
\left(v^{\alpha}f_{,\alpha}^{\left(0\right)}\right)_{visc.}=\left[\left(\frac{1}{2}-k_{p}\left(z\right)\right)\frac{\gamma_{k}^{2}}{z}+\left(k_{p}
\left(z\right)g\left(z\right)-1\right)\gamma_{k}-\frac{1}{2z}\right]u_{,\nu}^{\nu}-\frac{v^{\mu}v^{\alpha}}{zc^{2}}\mathring{\sigma}_{\mu\alpha},
\label{13-1}
\end{equation}
where $\mathring{\sigma}_{\mu\alpha}$ is the symmetric traceless
component of $u_{\mu,\alpha}$. Notice that, in the non-relativistic
case, the term in brackets vanishes since in such limit $z\rightarrow0$,
$\gamma\rightarrow1$ and thus $g\left(z\right)\sim z^{-1}+1$ and
$k_{p}\left(z\right)\sim1-2z$. Due to this fact, the bulk viscosity
is zero for the monoatomic ideal gas at low temperatures. However,
as will be verified in the next section, is non-zero for $z\neq0$ and
increases rapidly for $0<z<1$.

Once the driving term given by Eq. (\ref{13-1}) is substituted in
Eq. (\ref{9}), a constitutive equation for the scalar part of the
Navier tensor can be found. Such a relation is written as
\begin{equation}
\pi_{\mu}^{\mu}=2\mu u_{,\alpha}^{\alpha}.\label{14-1}
\end{equation}

Following the standard procedure (see for example Refs. \cite{Ch-E,Courant})
one can obtain a first approximation for $\mu$. The steps are carefully
detailed in Ref. \cite{Garcia-Mendez2018}, here we only quote the
result:

\begin{equation}
\mu=\frac{4z^{7}}{\left(1+z\right)^{2}}\frac{mc^{2}}{\left(2z^{2}+4z+1\right)^{2}}\left[\gamma_{k}^{2},\gamma_{k}^{2}\right]^{-1},\label{15}
\end{equation}
where
\begin{equation}
\left[H,G\right]=-\frac{1}{n^{2}}\int\mathcal{C}\left(H\right)Gf^{\left(0\right)}d^{*}v\label{16}
\end{equation}
is the collision bracket, which satisfies
\begin{equation}
\left[H,G\right]=-\frac{1}{4n^{2}}\int\left(H'+H'_{1}-H-H_{1}\right)\left(G'+G'_{1}-G-G_{1}\right)f^{(0)}f_{*}^{(0)}F\sigma(\chi)d\chi dv_{1}^{*}dv^{*}d^{*}v.\label{17}
\end{equation}
In the next section, the behavior of $\mu$ as a function of $z$
for a hard disks model will be described.

\section{The temperature dependence of the viscosity for a hard disks gas}

\label{visc}

The so-called collision integrals defined in Eq. (\ref{16}) depend
on a molecular interaction model for the system. The simplest case
in a bidimensional scenario consists on a hard disk model for which the
scattering cross section is given by
\[
\sigma\left(\chi\right)=\frac{d}{2}\left|\sin\left(\frac{\chi}{2}\right)\right|.
\]
In such a case, as is shown in Appendix C of Ref. \cite{Garcia-Mendez2018},
the relevant collision integral is given by
\begin{equation}
\left[\gamma^{2},\gamma^{2}\right]=\frac{2cz^{3}d}{15}\frac{1}{\left(z+z^{2}\right)^{2}}\mathcal{I}\left(z\right),\label{18}
\end{equation}
where the integral $\mathcal{I}\left(z\right)$ is given in Eq. (\ref{1}).
Thus, one can write the bulk viscosity for the system as
\begin{equation}
\mu(z)=\frac{30mc}{d}\frac{z^{6}}{\left(2z^{2}+4z+1\right)^{2}}\mathcal{I}\left(z\right)^{-1}.\label{19}
\end{equation}
In the next section, the non-relativistic limit of this expression
will be carefully addressed. However, it can be seen at this point,
by inspection of Fig. \ref{fig:1} that $\mu$ vanishes at $z=0$. It is important to notice that for $z$ finite, $\mu\neq0$
and moreover, it increases very rapidly with temperature.

\begin{figure}
\begin{center}
\includegraphics[scale=0.8]{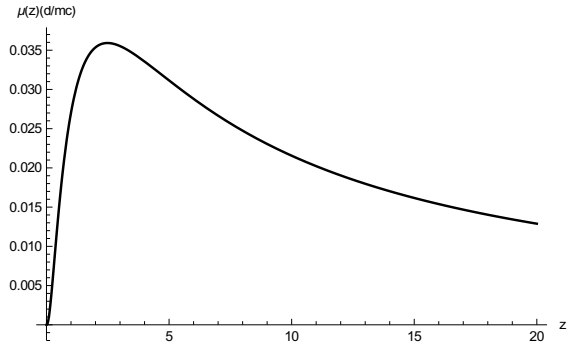}\caption{The dimensionless bulk viscosity as a function of the parameter $z$
in the relativistic scenario.\label{fig:1}}
\end{center}
\end{figure}
Bulk viscosity for the monoatomic ideal gas is thus non-zero as long
as $T\neq0$ and becomes relevant for some range of values of $z$.
The fact that it reaches a maximum value, which can be numerically
obtained as $\mu_{max}\sim\mu\left(2.4886\right)\sim0.0359$, and
then decreases for higher temperatures is also found in the three
dimensional case and deserves a closer analysis. One can then conclude
that for a limited range of temperature, for each gas, viscosity is
enhanced. This could lead to faster damping/enhancing of instabilities.

\section{Non-relativistic and ultrarelativistic limits of $\mu\left(z\right)$}

\label{limits}

In this section, a formal proof of the non-relativistic and ultrarelativistic
limits of the expression for $\mu$ obtained in Ref. \cite{Garcia-Mendez2018}
and quoted in the previous section is detailed. As mentioned above,
inspection of Fig. \ref{fig:1} points towards the bulk viscosity
approaching zero in the non-relativistic and ultra-relativistic limits.
This behavior is proven separately for each case. In particular, the
non-relativistic limit of this and other transport coefficients is
far from trivial since the lower integration limit tends to infinity
and thus the usual techniques cannot be applied.

We thus begin by addressing the low temperature, non-relativistic,
case. Let us start defining the function
\begin{equation}\label{funcionh}
h(x,z)=e^{-\left(x-\frac{2}{z}\right)}\left(\frac{1}{x}+\frac{3}{x^{2}}+\frac{3}{x^{3}}\right)\left(z^{2}x^{2}-4\right)^{5/2},
\end{equation}
which corresponds to the integrand in $\mathcal{I}\left(z\right)$ and an auxiliary function
\begin{equation}\label{funcionk}
k(x,z)=e^{-\left(x-\frac{2}{z}\right)}\left(\frac{1}{x^{2}}\right)\left(z^{2}x^{2}-4\right)^{3},
\end{equation}
both well defined on $A={(x,z) \in {\rm I\!R}^2, \mid xz \geq 2, z>0}$.

\begin{prop}\label{prop1}
   For all $(x,z)\in A$ and $z<1$, it is verified $0<k(x,z)\leq h(x,z)$.
\end{prop}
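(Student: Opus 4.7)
The plan is to observe that $h$ and $k$ share the common positive factor $e^{-(x-2/z)}$ (which is strictly positive everywhere on $A$), so the claim reduces, after cancellation, to a purely algebraic comparison. More precisely, on $A$ one has $z^{2}x^{2}-4\geq0$, hence both $h$ and $k$ are non-negative; writing
\begin{equation*}
h(x,z)-k(x,z)=e^{-(x-2/z)}(z^{2}x^{2}-4)^{5/2}\left[\frac{1}{x}+\frac{3}{x^{2}}+\frac{3}{x^{3}}-\frac{(z^{2}x^{2}-4)^{1/2}}{x^{2}}\right],
\end{equation*}
the sign of $h-k$ is controlled entirely by the bracket (the prefactor being non-negative).

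Next, I would bound the offending term using the defining constraint of $A$ and the hypothesis $z<1$. Since $z^{2}x^{2}-4\leq z^{2}x^{2}$, we get $\sqrt{z^{2}x^{2}-4}\leq zx$, and therefore
\begin{equation*}
\frac{(z^{2}x^{2}-4)^{1/2}}{x^{2}}\;\leq\;\frac{zx}{x^{2}}\;=\;\frac{z}{x}\;<\;\frac{1}{x}\;\leq\;\frac{1}{x}+\frac{3}{x^{2}}+\frac{3}{x^{3}},
\end{equation*}
where the strict middle inequality uses $z<1$ and $x>0$ (which follows from $xz\geq 2$ and $z>0$). Substituting back shows the bracket is positive, so $k(x,z)\leq h(x,z)$ on all of $A\cap\{z<1\}$.

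For the strict positivity $0<k(x,z)$, I would note that on the interior $xz>2$ the three factors $e^{-(x-2/z)}$, $1/x^{2}$ and $(z^{2}x^{2}-4)^{3}$ are all strictly positive, giving $k>0$; on the boundary $xz=2$ both $h$ and $k$ vanish, so the inequality $k\leq h$ still holds trivially, and the stated positivity is to be read on the effective integration domain $xz>2$ (which is where $\mathcal{I}(z)$ in Eq.~(\ref{1}) picks up mass).

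I do not foresee a real obstacle here: the argument is a one-line estimate $\sqrt{z^{2}x^{2}-4}\leq zx$ combined with $z<1$. The only thing to be careful about is not to divide by $(z^{2}x^{2}-4)^{5/2}$ (which can vanish at the boundary) but instead to carry out the comparison additively, factoring out the non-negative prefactor as above; this keeps the argument valid on all of $A$, not merely on its interior.
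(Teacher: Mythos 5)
Your proof is correct and follows essentially the same route as the paper's: both cancel the common positive factor $e^{-(x-2/z)}$ and the power $(z^{2}x^{2}-4)^{5/2}$, reducing the claim to the algebraic inequality $(z^{2}x^{2}-4)^{1/2}< x+3+\tfrac{3}{x}$, which you establish directly via $\sqrt{z^{2}x^{2}-4}\leq zx<x$ (using $z<1$) while the paper reaches the same conclusion through the substitution $x=2\alpha/z$. Your additional observation that $k$ (and $h$) actually vanish on the boundary $xz=2$, so the strict lower bound $0<k$ only holds on the interior, is a fair point that the paper's statement glosses over, and it does not affect the use of the proposition in Theorem~\ref{teo1}.
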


\begin{proof}
The equality holds for $x=2/z$. For $x>2/z$ we claim
\[
 e^{-\left(x-\frac{2}{z}\right)}\left(\frac{1}{x^{2}}\right)\left(z^{2}x^{2}-4\right)^{3}<
 e^{-\left(x-\frac{2}{z}\right)}\left(\frac{1}{x}+\frac{3}{x^{2}}+\frac{1}{x^{3}}\right)\left(z^{2}x^{2}-4\right)^{5/2}
\]
which is equivalent to
\[
  \left(z^{2}x^{2}-4\right)^{1/2}<\left(x+3+\frac{3}{x}\right)\,.
\]
If we define $x=2\alpha/z$ for $\alpha\in(1,\infty)$, last inequality can be written as
\[
   2z\left(\alpha^{2}-1\right)^{1/2}<2\alpha+3z+\frac{3z^{2}}{2\alpha}
\]
which holds for any $\alpha>1$ and for $z\in(0,1)$.
\end{proof}

\begin{theo}\label{teo1}
    $\mu(z)\rightarrow0^{+}$ as $z\rightarrow0^{+}$.
\end{theo}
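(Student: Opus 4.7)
The plan is to bound $\mathcal{I}(z)$ from below using Proposition~\ref{prop1} and extract the leading $z \to 0^+$ behavior of that lower bound, then combine this with the explicit prefactor $z^6/(2z^2+4z+1)^2$ to conclude. Concretely, since $\mu(z)$ is a positive quantity times $z^6/\mathcal{I}(z)$, the goal is to show $\mathcal{I}(z)$ grows at least like a positive multiple of $z^5$ as $z\to 0^+$, which will give $z^6/\mathcal{I}(z) = O(z)$.

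First, I would invoke Proposition~\ref{prop1} to write, for $z<1$,
\[
\mathcal{I}(z) \;\geq\; \int_{2/z}^{\infty} k(x,z)\, dx \;=\; \int_{2/z}^{\infty} e^{-(x-2/z)}\frac{(z^{2}x^{2}-4)^{3}}{x^{2}}\, dx.
\]
Then I would perform the change of variable $u = zx$ followed by $u = 2 + zs$ (i.e.\ $s = (u-2)/z \ge 0$), which factors the exponential cleanly and isolates the $z$-dependence of the polynomial part. The computation $z^2x^2 - 4 = (u-2)(u+2) = zs(4+zs)$ turns the lower bound into
\[
\int_{2/z}^{\infty} k(x,z)\, dx \;=\; z^{5}\int_{0}^{\infty} e^{-s}\,\frac{s^{3}(4+zs)^{3}}{(2+zs)^{2}}\, ds.
\]

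Next, I would study the remaining $s$-integral as $z\to 0^{+}$. The integrand converges pointwise to $16\, s^{3}e^{-s}$. For $0<z\leq 1$, the estimates $(4+zs)^{3}\leq (4+s)^{3}$ and $(2+zs)^{-2}\leq 1/4$ give the integrable majorant $\tfrac{1}{4}s^{3}(4+s)^{3}e^{-s}$, so dominated convergence yields
\[
\lim_{z\to 0^{+}}\int_{0}^{\infty} e^{-s}\,\frac{s^{3}(4+zs)^{3}}{(2+zs)^{2}}\, ds \;=\; 16\int_{0}^{\infty} s^{3}e^{-s}\, ds \;=\; 96.
\]
In particular there exist $z_{0}\in(0,1)$ and $C>0$ with $\mathcal{I}(z)\geq C z^{5}$ for all $0<z<z_{0}$.

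Combining this with Eq.~(\ref{19}) gives, for $0<z<z_{0}$,
\[
0 \;<\; \mu(z) \;\leq\; \frac{30mc}{d}\cdot\frac{z^{6}}{(2z^{2}+4z+1)^{2}}\cdot\frac{1}{Cz^{5}} \;=\; \frac{30mc}{Cd}\cdot\frac{z}{(2z^{2}+4z+1)^{2}},
\]
whose right-hand side clearly tends to $0$ as $z\to 0^{+}$, while positivity of $\mu(z)$ is immediate from the positivity of the integrand of $\mathcal{I}(z)$. The main obstacle is the dominated-convergence step: one must pick the substitution $u=2+zs$ so that the singular lower limit $2/z$ becomes the fixed lower limit $0$, and then verify uniform integrability of $s^{3}(4+zs)^{3}/(2+zs)^{2}$ against $e^{-s}$ for all sufficiently small $z$; everything else reduces to routine algebra on the prefactor.
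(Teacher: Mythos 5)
Your proposal is correct, and it follows the paper's overall strategy exactly: positivity of $\mu$, the lower bound $\mathcal{I}(z)\geq\int_{2/z}^{\infty}k(x,z)\,dx$ from Proposition~\ref{prop1}, the leading behaviour $96z^{5}$ of that lower bound, and the resulting upper bound $\mu(z)=O(z)$. Where you differ is in how the $96z^{5}$ asymptotics is extracted. The paper evaluates $\int_{2/z}^{\infty}k\,dx$ in closed form, obtaining a polynomial in $z$ plus the term $64e^{2/z}\int_{2/z}^{\infty}e^{-t}t^{-1}dt$, and then asserts (without detailed justification) that the whole expression equals $96z^{5}+\mathcal{O}(z^{6})$ for small $z$; making that rigorous requires the asymptotic expansion of the exponential integral at large argument. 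You instead substitute $u=zx$, $u=2+zs$ to rewrite the bound as $z^{5}\int_{0}^{\infty}e^{-s}s^{3}(4+zs)^{3}(2+zs)^{-2}ds$ and apply dominated convergence with the majorant $\tfrac{1}{4}s^{3}(4+s)^{3}e^{-s}$, which yields the limit $96$ directly and self-containedly. Your version is arguably cleaner and fills in the analytic detail the paper glosses over, at the cost of not producing the explicit closed form; note also that your substitution is essentially the same one that would be needed to justify the ultrarelativistic limit in Theorem~\ref{teo2}, so it unifies the two cases. One cosmetic point: you only need the limit to be positive (not its exact value $96$) to get the constant $C$ and conclude, so your argument is even slightly more robust than stated.
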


\begin{proof}
Since $\mu\geq0$ for $z>0$, we can establish that the limit vanishes by simply upper-bounding the function by an auxiliary real-valued analytic function
that tends to zero in such limit.

From Proposition \ref{prop1} we have
\begin{equation}
0<\int_{\frac{2}{z}}^{\infty}k\left(x,z\right)dx\leq\int_{\frac{2}{z}}^{\infty}h\left(x,z\right)dx,\label{eq:inequality}
\end{equation}
where the integral on the left hand side can be computed as
\begin{eqnarray}
\int_{\frac{2}{z}}^{\infty}k\left(x,z\right)dx&=&8\left(-4z+2z^{2}-2z^{3}+3z^{4}+6z^{5}+3z^{6}\right)+64e^{2/z}\int_{2/z}^{\infty}\frac{e^{-t}}{t}dt,\\\nonumber
             &=& 96z^{5}+\mathcal{O}(z^{6})   \label{eq:k}
\end{eqnarray}
for $z\in\left(0,\epsilon\right)$, where $0<\epsilon<1$ is such that the last expression becomes true, namely there exists an $M>0$ and
\[
   \left|\int_{\frac{2}{z}}^{\infty}k\left(x,z\right)dx-96z^{5}\right|\leq M\left|z^{7}\right|
\]
for $0<z<\epsilon$.

Thus we have
\[
0<\mu(z)<\frac{30mc}{d}\frac{z^{6}}{\left(2z^{2}+4z+1\right)^{2}}\left( 96z^{5}+\mathcal{O}(z^{6})\right)^{-1},
\]
which formally shows that $\mu\rightarrow0^{+}$ as $z\rightarrow0^{+}$ in the non relativistic
limit.
\end{proof}

The ultrarelativistic case is established in the following.

\begin{theo}\label{teo2}
   $\mu(z)\rightarrow0^{+}$ as $z\rightarrow+\infty$.
\end{theo}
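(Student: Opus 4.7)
The plan is to mirror the strategy of Theorem \ref{teo1}: since $\mu(z) \geq 0$ for every $z > 0$, it suffices to exhibit an upper bound on $\mu(z)$ that tends to zero as $z \to \infty$. Writing Eq.~(\ref{19}) as $\mu(z) = (30mc/d)\,P(z)/\mathcal{I}(z)$ with $P(z) = z^{6}/(2z^{2}+4z+1)^{2}$, and noting that $P(z) \sim z^{2}/4$ for large $z$, the task reduces to producing a lower bound on $\mathcal{I}(z)$ that grows faster than $z^{2}$. A crude bound of order $z^{5}$ is in fact sufficient, and it can be read off the integrand almost by inspection.

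To construct such a bound, I would restrict the integration in $\mathcal{I}(z)$ to the subinterval $[1,2]$, which lies entirely inside $[2/z,\infty)$ as soon as $z \geq 3$ (then $2/z \leq 2/3 < 1$). On that interval each factor of the integrand admits an elementary lower bound: $e^{-(x-2/z)} \geq e^{-2}$; the rational factor $1/x + 3/x^{2} + 3/x^{3}$ is monotonically decreasing in $x$, hence bounded below by its value at $x=2$, namely $13/8$; and $(z^{2}x^{2}-4)^{5/2} \geq (z^{2}-4)^{5/2}$ for $x \geq 1$. Multiplying these and integrating over the unit-length interval gives
\begin{equation*}
\mathcal{I}(z) \;\geq\; \frac{13\,e^{-2}}{8}\,(z^{2}-4)^{5/2}, \qquad z \geq 3,
\end{equation*}
which is of order $z^{5}$ for $z$ large.

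Inserting this estimate into Eq.~(\ref{19}) yields
\begin{equation*}
0 \;<\; \mu(z) \;\leq\; \frac{30mc}{d}\cdot\frac{8\,e^{2}}{13}\cdot\frac{z^{6}}{(2z^{2}+4z+1)^{2}\,(z^{2}-4)^{5/2}}, \qquad z \geq 3,
\end{equation*}
whose right-hand side decays like $1/z^{3}$ as $z \to \infty$, so $\mu(z) \to 0^{+}$ by squeezing. Contrary to Theorem \ref{teo1}, where the non-relativistic limit pushes the lower integration limit to $+\infty$ and forces a delicate analysis of the integrand, the ultrarelativistic limit pulls that limit down to $0$, so a mere restriction to a fixed compact subinterval of $(0,\infty)$ already captures more than enough growth in $z$ to swamp the polynomial prefactor. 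I do not anticipate any substantive obstacle; the only point requiring care is to keep the chosen subinterval inside the domain of integration, which is the reason for the hypothesis $z \geq 3$.
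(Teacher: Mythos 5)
Your proposal is correct, and it takes a genuinely different route from the paper. The paper's proof sends $2/z\to 0$ and pulls the limit inside the integral, obtaining the exact leading asymptotics
\begin{equation*}
\mathcal{I}(z)\sim z^{5}\int_{0}^{\infty}e^{-x}x^{2}\left(x^{2}+3x+3\right)dx=48\,z^{5},
\end{equation*}
so that $\mu(z)\sim\frac{5}{8}\frac{mc}{d}\,\frac{z}{\left(2z^{2}+4z+1\right)^{2}}\to 0$. That argument yields the sharp constant and the precise decay rate $\mu\sim\frac{5}{32}\frac{mc}{d}z^{-3}$, but as written it silently interchanges a limit with an improper integral (justifiable by dominated convergence, e.g.\ using $e^{2/z}\le e$ and $(x^{2}-4/z^{2})^{5/2}\le x^{5}$, though the paper does not say so). You instead forgo the exact asymptotics and extract only a lower bound $\mathcal{I}(z)\ge\frac{13e^{-2}}{8}\left(z^{2}-4\right)^{5/2}$ by restricting the integration to the fixed window $[1,2]\subset[2/z,\infty)$ for $z\ge 3$ and bounding each factor there; your three elementary estimates are all correct ($e^{-(x-2/z)}\ge e^{-2}$, the decreasing rational factor is at least $13/8$ at $x=2$, and $z^{2}x^{2}-4\ge z^{2}-4\ge 0$), and the resulting upper bound on $\mu$ decays like $z^{-3}$, which is in fact the true rate. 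Your version is more elementary and fully rigorous without any convergence theorem, at the cost of losing the asymptotic constant; it also nicely mirrors the squeeze strategy of Theorem \ref{teo1}, whereas the paper switches methods between the two limits.
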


\begin{proof}
We realize that $\frac{2}{z}\rightarrow0$, therefore
\begin{eqnarray*}
\lim_{z\rightarrow\infty}\mathcal{I}\left(z\right)&=&\left(\lim _{z\rightarrow\infty}z^{5}\right)\int_{0}^{\infty}e^{-x}x^{2}\left(x^{2}+3x+3\right)dx\\
   &=&48\lim_{z\rightarrow\infty}z^{5}\,.
\end{eqnarray*}

Thus
\[
\lim_{z\rightarrow\infty}\mu(z)=\frac{5}{8}\frac{mc}{d}\lim_{z\rightarrow\infty}\frac{z}{\left(2z^{2}+4z+1\right)^{2}}=0.
\]
as is claimed
\end{proof}

Theorem \ref{teo1} and \ref{teo2} claim that the bulk viscosity
becomes negligible in the limits. This implies that, whatever effect it has on the dynamics
of the fluids, it should only be relevant in a finite interval of
$z$. In order to assess the possible effect of bulk viscosity dissipation
and provide an example of an experiment that may lead to the measure
of such effect, we carry out a linear analysis of a free relativistic
gas to first order in statistical density fluctuations.

\section{Modification to the Brillouin peaks due to bulk viscosity}

\label{RB}

The transport equation for the relativistic gas can be readily established
by multiplying Eq. (\ref{2}) by the collisional invariants and integrating
in velocity space. The procedure is the standard one, and leads to
two conservation equations
\begin{equation}
N_{,\nu}^{\nu}=0,\qquad T_{,\nu}^{\mu\nu}=0\label{20}
\end{equation}
where the particle flux is given by
\begin{equation}
N^{\nu}=\int f\left(v^{\nu}\right)v^{\nu}dv^{*}\label{21}
\end{equation}
and the energy momentum tensor is
\begin{equation}
T^{\mu\nu}=m\int f\left(v^{\nu}\right)v^{\mu}v^{\nu}dv^{*}.\label{22}
\end{equation}
The relation of such moments with the state variables is given, in
Eckart's frame and using the (2+1) decomposition by $N^{\nu}=nu^{\nu}$
and $T^{\mu\nu}=n\varepsilon u^{\mu}u^{\nu}/c^{2}+ph^{\mu\nu}+\pi^{\mu\nu}+q^{\mu}u^{\nu}/c^{2}+u^{\mu}q^{\nu}/c^{2}$.
Such expressions are then introduced in the balance equations (Eq.
(\ref{20}, \ref{21})) and the state variables are assumed to be given
by an equilibrium value plus a small fluctuation: $X=X_{0}+\delta X$.
The resulting system of equations, to first order in fluctuations
($\delta n$, $\delta u^{\mu}$ and $\delta T$ ) can be then transformed
to Fourier-Laplace space, in which the corresponding dispersion relation
is given by
\begin{equation}
s^{3}+a_{1}q^{2}s^{2}+\left(a_{2}q^{2}+a_{3}\right)q^{2}s+a_{4}q^{4}=0\label{23}
\end{equation}
where
\[
a_{1}=-\frac{1}{\rho}\left(A-\frac{\rho}{p}k_{p}\left(z\right)L_{T}+\frac{1}{c^{2}}\left(L_{n}+k_{p}\left(z\right)L_{T}\right)\right),
\]
\[
a_{2}=-\frac{k_{p}\left(z\right)}{\rho}AL_{T},
\]
\[
a_{3}=\frac{p}{\rho}\left(1+k_{p}\left(z\right)\right),
\]
\[
a_{4}=\frac{k_{p}\left(z\right)}{\rho}\left(L_{T}-L_{n}\right),
\]
here bulk viscosity enters in Eq. (\ref{23}) through
the relation $A=4\eta/3+\mu$ where $\eta$ is the shear viscosity,
and $L_{T}$ and $L_{n}$ are the transport coefficients appearing
in the relativistic heat flux constitutive equation \cite{Garcia-Mendez2018,Israel}. Equation (\ref{23})
has the same structure as the dispersion relation in the non-relativistic
case which can be analyzed using Mountain's method \cite{Mountain}. Following such
approximation, one can identify a purely decaying mode, corresponding
to a real root given by $s_{1}=-a_{4}q^{2}/a_{3}$. The remaining
two roots correspond to a conjugate pair
\begin{equation}
s_{2,\,3}=-\frac{1}{2}\left(a_{1}+\frac{a_{4}}{a_{3}}\right)\pm iq\sqrt{a_{3}}\label{24}
\end{equation}
which leads to decaying, oscillating modes. Thus, the corrections due
to the relativistic nature of the molecular dynamics of the disks could
be measured in a light scattering experiment where a Rayleigh-Brillouin
spectrum can be obtained. In particular, since the focus of this work
is the effect of the bulk viscosity, we are only concerned with the
Brillouin doublet width, which is given by the real part of the complex
roots (Eq. (\ref{24})). The width of the central peak as well as
the location of the lateral ones is not affected by the presence of
$\mu$. Let's call
\begin{equation}
W\left(\mu(z)\right)=-\frac{1}{2}\left(a_{1}+\frac{a_{4}}{a_{3}}\right),\label{25}
\end{equation}
the width of the Brillouin peaks. Also, in order to understand better the effect of $\mu$, in Fig.
\ref{fig:2}  we show the ratio $\left(W(\mu)-W(\mu=0)\right)/W(\mu=0)$ as a function of $z$. Clearly
$W(\mu \neq 0)$ and $W(\mu =0)$
are equal at $z=0$, the ratio reaches a maximum for an intermediate
value of $z$ and finally tends to zero for large values of the temperature.
\begin{figure}
\begin{center}
    \includegraphics[scale=0.8]{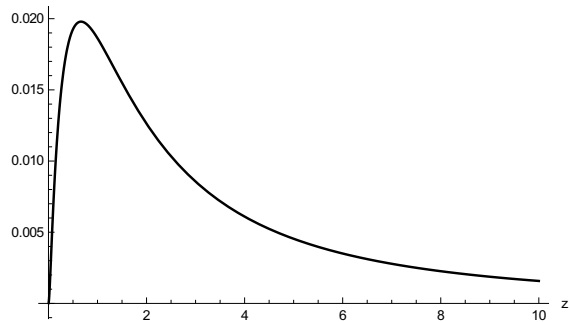}\caption{\label{fig:2} The ratio of the Brillouin widths
    $\left(W(\mu)-W(\mu=0)\right)/W(\mu=0)$ as a function of $z$.}
\end{center}
\end{figure}

\section{Discussion and final remarks}

\label{final} The integral in Eq. (\ref{1}) that drives to the complete
identification of the dependence of $\mu$ with the temperature in
the two-dimensional relativistic system is far of being simple. In
this work we showed that in both the non-relativistic and ultrarelativistic
limits the bulk viscosity of a 2D relativistic system vanishes and
the formal proof was presented.

Dissipation in single component fluids in the absence of external
forces is composed of two effects: viscous and thermal. While thermal
dissipation in relativistic systems modifies in a somewhat dramatical
fashion the structure of the transport equations, the viscous tensor
effects remain the same in form however the coefficients are significantly
altered. In particular, the Rayleigh-Brillouin spectrum is modified
by the presence of a non-vanishing bulk viscosity in the relativistic
scenario. The dynamics of density fluctuations clearly depend on all dissipative
contributions. In general, the shape of the Brillouin peaks is given
in terms of all the transport coefficients (see Eq. (\ref{25})) but
the presence of a non zero bulk coefficient in the intermediate temperature
regime leads to a slight modification of the spectrum, in particular
the width of the lateral peaks is altered, as is shown in Fig. \ref{fig:2}
and Eq. (\ref{25}). The analysis of the extreme limits shown in this
paper for all the transport coefficients in the (2+1) case is important
in the context of the development of new two-dimensional materials
as graphene and will be addressed elsewhere.

\section*{References}


\begin{thebibliography}{10}
\bibitem[1]{Ch-E} S. Chapman, T. G. Cowling, \textit{The Mathematical
Theory of Non-Uniform Gases}, Cambridge Univ. Press (1970).

\bibitem[2]{CercignaniKremer} C. Cercignani, G. Medeiros Kremer,
\textit{The Relativistic Boltzmann Equation: Theory and Applications},
3rd edn. Cambridge University press, Cambridge (1991).

\bibitem[3]{Garcia-Mendez2018} Garc\'ia-Perciante A. L. and M\'endez
A. R., \textit{Dissipative properties of relativistic two-dimensional
gases} (arxiv:1810.04342 ), (2018).

\bibitem[4]{Courant} R. Courant and D. Hilbert, \textit{Methods of
Mathematical Physics}, Interscience Publishers, Inc., New York, (1937).

\bibitem[5]{Garcia-Sandoval-Garcia}Garc\'ia-Perciante A. L., Sandoval-Villalbazo
A., Garc\'ia-Col\'in L. S, \textit{On the microscopic nature of dissipative
effects in special relativistic kinetic theory}, Jour. Non-Equilib.
Thermodyn. \textbf{37}, 43 (2012)

\bibitem[6]{Mendoza} M. Mendoza, I. Karlin, S. Succi and H. J. Herrmann,
Ultrarelativistic transport coefficients in two dimensions, J. Stat.
Mech. (2013) P02036.

\bibitem[7]{Israel} W. Israel. \textit{Relativistic kinetic theory of a simple gas},
J. Math. Phys., \textbf{4}, 1163, (1963).

\bibitem[8]{Mountain} R. D. Mountain, \textit{Spectral Distribution of Scattered Light in a Simple Fluid}, 
Rev. Mod. Phys. \textbf{38}, 1 (1966).
\end{thebibliography}
\end{document}